\newtheorem{Definition}{Definition}
\newtheorem{Proposition}{Proposition}
\newtheorem{Lemma}{Lemma}
\theoremstyle{definition}
\begin{document}

\begin{titlepage}
\title{Haantjes manifolds with symmetry} 
\author{F. Magri\\
\small{Dipartimento di Matematica ed Applicazioni, Universita' di Milano Bicocca,}\\ \small{20125 Milano, Italy}}
\date{December 06, 2017}
\maketitle

\begin{abstract}
This paper has two purposes. The first is to introduce the definition of Haantjes manifolds with symmetry. The second is to explain why these manifolds appear in the theory of integrable systems of hydrodynamic type and in topological field theories.
\end{abstract}
\end{titlepage}

\section{Introduction}
In 1968 K. Yano and M. Ako significantly extended the work of Schouten and Nijenhuis on differential concomitants. Several years earlier, in 1940 and in 1951, Schouten and Nijenhuis had discovered two remarkable differential concomitants, nowadays called the Schouten bracket and the Nijenhuis torsion, associated with a skewsymmetric tensor field of type $(2,0)$ and with a tensor field of type $(1,1)$ respectively. In  \cite{YA} Yano and Ako found the analogs of these concomitants for a wide class of higher-order tensor fields. In particular,  they noticed that if  $ C_{jk}^{l}(x) $ are the components of a tensor field $ C$  of type $(1,2)$  on a manifold $M$,  the functions

\begin{equation}\label{YanoAko}
\begin{aligned}
 \left[ C,C \right] _{jklr}^{m} :=&\sum_{s=1}^{n} \left(
C_{sj}^{m}\frac{\partial C_{lr}^{s}}{\partial {x}_{k}}+
C_{sk}^{m}\frac{\partial C_{lr}^{s}}{\partial {x}_{j}}-
C_{sr}^{m}\frac{\partial C_{jk}^{s}}{\partial {x}_{l}}-   
C_{sl}^{m}\frac{\partial C_{jk}^{s}}{\partial {x}_{r}}\right.\\
&\left.+\frac{\partial C_{jk}^{m}}{\partial {x}_{s}} C_{lr}^{s}-
\frac{\partial C_{lr}^{m}}{\partial {x}_{s}} C_{jk}^{s}  \right) .
\end{aligned}
\end{equation}
are the components of a tensor field $\left[ C,C \right] $  of type $(1,4)$, provided that the components of $C$ satisfy the symmetry conditions

\begin{equation}\label{commutativity}
 C_{jk}^{l}  =  C_{kj}^{l}  
\end{equation}
and the associativity conditions
\begin{equation}\label{associativity}
\sum_{l=1}^{n} C_{jk}^{l} C_{lm}^{s} = \sum_{l=1}^{n} C_{mk}^{l} C_{lj}^{s} .
\end{equation}
The new tensor field $\left[ C,C \right] $ may be referred to as the Yano-Ako bracket of $C$ with itself. Its discovery was the result of a cumbersome computation in the style of the old tensor calculus. Yano and Ako started by considering different Lie derivatives of $C$ along several vector fields, and managed to combine them in such a way as to cancel all the terms containing derivatives of the components of the vector fields. 
They realized that the cancellation could  be brought to an end if the tensor field $C$ obeyed the algebraic constraints written above. The new object lacked any geometric interpretation, and consequently it was difficult to foresee  possible uses for it . For this reason, the Yano-Ako bracket did not attract much attention and was rapidly forgotten.

A surprising application of the above bracket has been found many years later in the theory of integrable systems of hydrodynamic type .  The study of the universal Whitham equations introduced by  Krichever \cite{Krich} has led, in 2007, Boris Konopelchenko and me to consider a special class of deformations  of associative and commutative algebras, called coisotropic deformations \cite{KoMa}. We found that these deformations were controlled by a remarkable set of differential equations, which we called the ``central system''. 
At that time we were unaware of the work of Yano and Ako, but we quickly realized that the central system was nothing but the vanishing of the Yano-Ako bracket. The effect was to attract our attention to the work of these  authors, and to convince ourselves of the importance of their bracket in the theory of integrable systems of hydrodynamic type . Accordingly, we began to look for a geometric interpretation of their bracket. The outcome of the ensuing work is the 
concept of Haantjes manifold discussed in the present paper.

Haantjes manifolds are a tool in the analysis of the foundations of the theory of integrable systems of hydrodynamic type from a geometric point of view. The concept of Haantjes manifold may help to understand what is the minimal system of assumptions  to be set at the basis of the theory, and what is the role of each separate assumption. It can be regarded as a ramification of the concept of bihamiltonian manifold. 
The main novelty is the construction leading to a "square of exact 1-forms" . This construction extends the recursion procedures of bihamiltonian geometry. The square of 1-forms recovers many interesting integrability conditions which had already appeared in different contexts. Among them: the Yano-Ako equations of the theory of deformations; the WDVV equations of topological field theories \cite{Witten}, \cite{DVV}; the integrability conditions for the multiplicative structure of a Frobenius manifold  \cite{Dub}, \cite{Dub2}; the integrability conditions of the theory of semihamiltonian systems 
of hydrodynamic type \cite{Tsa}. 
The Haantjes manifolds endowed with symmetries have also an interesting link with Riemannian geometry.

This paper consists of three sections. Sec.~2 presents the concept of Haantjes manifold. Sec.~3 shows the links of these manifolds  with the Yano-Ako equations, with the WDVV equations, and with the theory of integrable systems of hydrodynamic type. Sec.~4, finally, establishes the link with Riemannian geometry.

\section{Haantjes manifolds}

Let us consider a manifold $M$, of dimension $n$, equipped with an exact 1-form $dA$ and with a tensor field $K : TM\rightarrow TM$ of type $(1,1)$. 
It is convenient to regard $K$ as a vector-valued 1-form on $M$, and to denote by $d_{K}$ the derivation on forms associated to $K$ according to the theory of derivations of Fr\"olicher and Nijenhuis \cite{FrNi}.  The tensor field $K$ naturally acts  on the 1-form $dA$ , mapping it into a new 1-form denoted by $KdA$. It often happens, for a wide choice of $dA$ and $K$, that the new 1-form $KdA$ is still exact . The exactness condition is the weak cohomological condition 
\begin{equation}
dd_{K} A =0, 
\end{equation}
which takes the form of a Euler-Poisson-Darboux system of partial differential equations in a suitable system of coordinates. By repeating the process, one finds that the 1-form $K^{2}dA$ is seldom exact. There is a new strong obstruction, represented by the vanishing of the   2-form
\begin{equation}
d_{K} d_{K} A =0 .
\end{equation}
Let us study this obstruction more attentively.

\begin{Lemma} Suppose that the first iterated 1-form $KdA$ is exact. Then the second iterated 1-form $K^{2}dA$ is exact if and only if   $d_{K} d_{K} A =0 $  or, equivalently, if and only if the 1-form $dA$ annihilates the Nijenhuis torsion of $K$, viewed as a vector-valued 2-form on $M$.
\end{Lemma}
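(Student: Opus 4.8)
The plan is to translate everything into the language of the Frölicher--Nijenhuis derivation $d_K$ and exploit its basic algebraic properties. First I would recall that for any vector-valued $1$-form $K$, the associated derivation $d_K$ has degree $+1$ and that the graded commutator $[d_K,d_K]=2\,d_K d_K$ equals (up to a standard sign) the derivation $d_{[K,K]_{FN}}$ induced by the Frölicher--Nijenhuis bracket of $K$ with itself, where $[K,K]_{FN}$ is a vector-valued $2$-form whose value is exactly twice the Nijenhuis torsion $\mathcal{N}_K$. Concretely, on a function $A$ one has $d_K A = dA\circ K = K^{*}dA$, i.e. $d_K A$ is the $1$-form $KdA$; so the exactness of $KdA$ is literally the statement $dd_K A=0$, which is the weak condition already recorded in the excerpt.

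Next I would compute $d_K d_K A$ directly using $d_K^2=\tfrac12 d_{[K,K]_{FN}}=d_{\mathcal N_K}$ applied to the function $A$. Evaluating the derivation $d_{\mathcal N_K}$ on the $0$-form $A$ simply contracts $dA$ with the vector-valued $2$-form $\mathcal N_K$, giving the $2$-form whose value on a pair of vector fields $(X,Y)$ is $dA\bigl(\mathcal N_K(X,Y)\bigr)$. This is precisely the assertion that ``$dA$ annihilates the Nijenhuis torsion of $K$'', so the equivalence between $d_K d_K A=0$ and the vanishing of $dA\circ\mathcal N_K$ is immediate once the identity $d_K^2=d_{\mathcal N_K}$ on functions is in hand.

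It remains to connect the vanishing of $d_K d_K A$ with the exactness of $K^{2}dA$, and here I would use the standard recursion formula relating consecutive iterates. For any function $A$ one has the commutation identity $d\,d_K A = d_K\,dA$ whenever $d_K A$ is a genuine exact form, and more precisely, writing $KdA=dB$ under the weak hypothesis, one computes $d_K(KdA)=d_K\,dB$. The point is that $d_K\,dA = d(K^{*}dA)$ fails in general, and the obstruction to the equality $K^{2}dA = d(\text{something})$ is measured by the mixed term $d_K d_K A$. So I would show $d(K^{2}dA)=\pm\, d_K d_K(dA)$ modulo the assumption that $KdA$ is already exact, reducing the exactness of $K^2 dA$ to the vanishing of the $2$-form $d_K d_K A$. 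The forward direction ($K^2 dA$ exact $\Rightarrow d_K d_K A=0$) follows by applying $d$ and using $d^2=0$ together with the weak hypothesis; the converse is obtained by a Poincaré-lemma type argument, since a closed iterated form whose single obstruction term vanishes is exact.

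The main obstacle I anticipate is bookkeeping the signs and the interplay of the two derivations $d$ and $d_K$ on forms of low degree: one must carefully verify the graded Leibniz rule and the commutator identity $[d,d_K]=d_{?}$ (which vanishes because $d$ is $d_{\mathrm{Id}}$ and the Frölicher--Nijenhuis bracket $[\mathrm{Id},K]_{FN}=0$), so that the only surviving obstruction to exactness of $K^2 dA$ is the single term $d_K d_K A$. Getting this vanishing identity $[d,d_K]=0$ correct, and confirming that it makes $d(K^2 dA)$ reduce exactly to $d_K d_K(dA)$ under the weak hypothesis, is the crux; everything else is a routine unwinding of definitions together with the identification of $d_K^2$ with contraction against the Nijenhuis torsion.
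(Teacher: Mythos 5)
Your proposal is correct, and it overlaps substantially with the paper's argument while taking a slicker route for one of the two equivalences. For the equivalence $d_Kd_KA=0\iff dA\circ T_K=0$ you invoke the Fr\"olicher--Nijenhuis identity $d_K^2=d_{\mathcal N_K}$ evaluated on the $0$-form $A$; this is exactly the identity $d_K^2A(\xi,\eta)=dA(T_K(\xi,\eta))$ that the paper proves in its appendix and uses (in the guise of its second identity with $\alpha=dA$), so here the two arguments coincide. For the equivalence with exactness of $K^2dA$ you diverge: the paper specializes a general formula $d(K\alpha)(\xi,\eta)=d\alpha(K\xi,\eta)+d\alpha(\xi,K\eta)-d_K\alpha(\xi,\eta)$ to $\alpha=KdA$, whereas you write $KdA=dB$ and compute $d(K^2dA)=d(d_KB)=-d_K(dB)=-d_Kd_KA$ directly from the anticommutation $dd_K+d_Kd=0$ (which holds since $[\mathrm{Id},K]_{FN}=0$, as you correctly anticipate). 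Your route is shorter and avoids proving the paper's first identity altogether, at the cost of leaning on the abstract FN calculus; the paper's route is more self-contained and its first identity is reused elsewhere. Two small blemishes, neither fatal: the expression $d_Kd_K(dA)$ should read $d_Kd_KA$ (the former would be a $3$-form), and since the relation $d(K^2dA)=-d_Kd_KA$ already gives both directions at once, the separate ``Poincar\'e-lemma'' discussion for the converse is redundant (the paper, working locally, identifies closed with exact throughout).
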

\begin{proof} Let $\alpha$ be any 1- form on $M$, and let $\alpha'$ be its first iterated 1-form  : $\alpha' = K \alpha$. Then the following identities relate the $d$ and $d_{K}$ differentials of these 1-forms :
\begin{align*}
d\alpha'(\xi,\eta)&=d\alpha(K\xi,\eta)+d\alpha(\xi,K\eta)-d_{K}\alpha(\xi,\eta)\\
d_K \alpha'(\xi,\eta)&=d\alpha(K\xi,K\eta)+\alpha(T_K(\xi,K\eta))  .
\end{align*}
Here the symbol $T_{K}(\xi,\eta)$ denotes the Nijenhuis torsion of $K$ , viewed as a vector-valued 2- form  evaluated on the arbitrary pair of vector fields $\xi$ and $\eta$. The above identities can be used as follows. Take  $\alpha=K dA=d_{K} A$ in the first identity. It immediately shows that $\alpha' = K^{2}dA$ is exact if and only if  $d_{K} d_{K} A =0 $. Take $\alpha=dA$ in the second identity. It immediately shows that  $d_{K} d_{K} A =0 $  if and only if $dA$ annihilates the torsion of $K$,  as claimed.
\end{proof}

This Lemma explains why the Nijenhuis torsion of $K$ plays a prominent role in the theory of recursion operators. It points out that there are only two  possibilities. If the Nijenhuis torsion of $K$ vanishes, there are no  other obstructions  to the process of iteration. All the iterated 1-forms $KdA$, $K^{2}dA$, $K^{3}dA$, and so on are exact. 
If the Nijenhuis torsion of $K$ does not vanish, the process of iteration ends after two steps, because the 1-form $K^{3}dA$ cannot be exact. A  way of circumventing this obstruction is to renounce  the idea of a single recursion operator, and to consider a more general  scheme where several recursion operators act at the same time. They should not be the powers of a single recursion operator.  To see how to manage the new situation, and to find the right conditions, let us look again at the case of a single recursion operator from a different standpoint. Let us agree to denote the first $n$ powers of $K$ by the 
symbols
\begin{equation*}
K_{1} = Id, \quad  K_{2} =K, \quad K_{3} = K^{2}, \cdots, K_{n} = K^{n-1},
\end{equation*}
to prepare the transition to the general case. It is clear from what has been said before that the doubly  iterated 1-forms $K_{j} K_{l} dA$ are exact,  since $ K_{j} K_{l}= K^{j+l-2} $. Of course also the triply iterated 1- forms $K_{j} K_{l} K_{m} dA$ are exact , but  it is wise to ignore this fact. Indeed, to insist on it would lead one to fall again into the old case. So, the right idea   is to work with  a family of $n$ distinct recursion operators $(K_{1} = Id,\ K_{2}, \ K_{3},  \cdots, K_{n} )$ which behave like the first $n$ powers of $K$  up to the second iteration, without being the powers of $K$. This idea is formalized in the following definition of Haantjes manifold.

\begin{Definition}  Consider a manifold $M$, of dimension $n$, equipped with an  exact 1-form $dA$. Assume that  $(K_{1} = Id,\ K_{2}, \ K_{3},  \cdots, K_{n} )$  are $n$ pairwise commuting tensor fields of type $(1,1)$  on $M$:
\begin{equation}
K_{j} K_{l} = K_l K_{j}  .
\end{equation}
The manifold $M$ is a Haantjes manifold if all the doubly iterated 1-forms  $K_{j} K_{l} dA$ are closed and therefore locally exact. These forms constitute the square of 1-forms  of the Haantjes manifold.
Since we shall limit ourselves to the study of the local geometry of the manifold, we shall always admit that the 1-forms $K_{j} K_{l} dA$  are exact. We set
\begin{equation}
K_{j} K_{l} dA = d A_{j l}  ,
\end{equation}
and we call the scalar functions $A_{j l} $ the potential functions of the Haantjes manifold. They form a symmetric matrix $H$ referred to as the matrix potential of the manifold.

\end{Definition}
It should be clear from the previous Lemma  that the recursion operators $K_{j}$  of a Haantjes manifold cannot be chosen arbitrarily. They must verify suitable integrability conditions, weaker than the vanishing of the Nijenhuis torsion, coming from the exactness condition for the 1-forms $dA_{j l} $. The discussion of the full set of integrability conditions is a delicate problem, which goes beyond the scope of the present paper. However, one basic condition must be mentioned.

\begin{Proposition} Assume that at least one of the recursion operators $K_{j}$ has real and distinct eigenvalues. Then the Haantjes torsion of all the recursion operators $K_{j}$ vanishes.
\end{Proposition}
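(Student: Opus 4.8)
The plan is to pass to a common eigenframe and reduce the whole statement to the vanishing of the structure functions of that frame. First I would use linear algebra: since the distinguished operator $K_a$ has $n$ distinct real eigenvalues and every $K_j$ commutes with it, all the $K_j$ are simultaneously diagonalizable in the eigenframe $(e_1,\dots,e_n)$ of $K_a$. Write $K_j e_i=\lambda_i^{(j)}e_i$ and let $c_{pq}^k$ be the structure functions of the frame, $[e_p,e_q]=\sum_k c_{pq}^k e_k$. Here $\lambda_i^{(a)}$ are pairwise distinct and $\lambda_i^{(1)}=1$ because $K_1=\mathrm{Id}$.

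The second step is a direct computation in this frame. Since the Nijenhuis torsion $T_K$ is tensorial, for any $K$ diagonal in the frame with $Ke_i=\mu_i e_i$ one has the factorisation $H_K(e_p,e_q)=(K-\mu_p)(K-\mu_q)\,T_K(e_p,e_q)$, and evaluating $T_K$ on the eigenframe gives
\[
H_K(e_p,e_q)=\sum_{k\neq p,q} c_{pq}^k\,(\mu_k-\mu_p)^2(\mu_k-\mu_q)^2\,e_k ,
\]
the diagonal $e_p,e_q$ contributions of $T_K$ (which carry the eigenvalue-derivatives) being killed by the two factors. The crucial consequence is that it suffices to prove $c_{pq}^k=0$ for pairwise distinct $p,q,k$: by the displayed formula this is equivalent to $H_{K_a}=0$, because for $K=K_a$ the coefficients $(\lambda_k^{(a)}-\lambda_p^{(a)})^2(\lambda_k^{(a)}-\lambda_q^{(a)})^2$ are nonzero, and the very same equations then force $H_{K_j}=0$ for every $j$. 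The proposition is thereby reduced to the vanishing of the off-diagonal structure functions of the eigenframe.

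The third step, which is the real obstacle, is to extract $c_{pq}^k=0$ from the defining closedness of the square. Expanding $K_jK_l\,dA=\sum_i \lambda_i^{(j)}\lambda_i^{(l)} a_i\,\theta^i$, with $a_i=e_i(A)$ and $\theta^i$ the coframe dual to $(e_i)$, closedness yields for each $p\neq q$ and all $j,l$ the moment relations
\[
e_p\!\left(\lambda_q^{(j)}\lambda_q^{(l)}a_q\right)-e_q\!\left(\lambda_p^{(j)}\lambda_p^{(l)}a_p\right)=\sum_i \lambda_i^{(j)}\lambda_i^{(l)}a_i\,c_{pq}^i .
\]
The Lemma already gives one clean piece of information: applied to $K=K_a$, which is legitimate since $K_a\,dA=K_aK_1\,dA$ and $K_a^2\,dA=K_aK_a\,dA$ both belong to the square, it shows that $dA$ annihilates $T_{K_a}$. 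My plan is then to read these relations as linear equations in the weights $\lambda_i^{(j)}\lambda_i^{(l)}$ and to isolate the coefficient of a single eigendirection $e_k$ by letting $(j,l)$ run over the whole commuting family, using the simple spectrum of $K_a$ to separate the directions; in general position ($a_k\neq 0$) this gives $c_{pq}^k=0$.

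The hard part is precisely this separation. The annihilation furnished by the Lemma is a single scalar relation per pair $(p,q)$, and a short computation shows it merely expresses $a_k c_{pq}^k$ through the transport derivatives $e_p(\lambda_q^{(a)})$ of the eigenvalues along the other eigendirections, rather than forcing $c_{pq}^k=0$ on its own; moreover the powers of $K_a$ available inside the square reach only the second degree, so for $n>3$ they do not interpolate the $n$ eigenvalues. One must therefore bring in the remaining operators $K_2,\dots,K_n$, argue that the Hadamard products of their eigenvalue vectors span $\mathbb{R}^n$, and — more delicately — show that the transport terms on the left-hand sides cancel once the full family of moment relations is combined. Establishing this cancellation, which is the geometric content of the hypothesis that one operator has a simple spectrum, is where I expect the main difficulty to lie.
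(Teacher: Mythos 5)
First, a point of reference: the paper deliberately does \emph{not} prove this Proposition --- it states explicitly that ``the proof is long'' and that the result will not be used afterwards --- so there is no proof in the text to measure yours against; your attempt has to stand on its own, and as written it does not close. The first two stages are correct and cleanly executed: commutativity with a $K_a$ of simple real spectrum forces every $K_j$ to be diagonal in the (local, smooth) eigenframe $(e_i)$ of $K_a$; the identity $H_K(e_p,e_q)=(K-\mu_p)^2(K-\mu_q)^2[e_p,e_q]$ for a frame-diagonal $K$ is right (the eigenvalue-derivative terms of $T_K$ are indeed killed by the two factors, as one checks directly); and the Proposition is thereby correctly reduced to showing $c^k_{pq}=0$ for pairwise distinct $p,q,k$. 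The moment relations you extract from $d(K_jK_l\,dA)=0$ are also correctly derived.

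The gap is the entire third step, which is where the actual content of the Proposition lives, and you concede this yourself. For each pair $(p,q)$ the closedness hypothesis gives a two-parameter family of scalar identities whose right-hand sides are the moments $\sum_i\lambda^{(j)}_i\lambda^{(l)}_i a_i c^i_{pq}$ but whose left-hand sides contain the unknown transport derivatives $e_p\bigl(\lambda^{(j)}_q\lambda^{(l)}_q a_q\bigr)$; even granting that the Hadamard products $\lambda^{(j)}\odot\lambda^{(l)}$ span $\mathbb{R}^n$ (which itself rests on the nondegeneracy assumption that the $dA_m$ form a basis, used by the paper only in Sec.~3 and not stated in the Definition), you cannot isolate $a_k c^k_{pq}$ without first eliminating those derivative terms, and no mechanism for that elimination is given. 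Note also that the only torsion-type information directly available from the Definition via the Lemma is that $dA$ --- not the $dA_m$ --- annihilates $T_{K_a}$, because the triply iterated forms $K_a^2K_m\,dA$ are not assumed closed; so the cancellation you hope for must emerge from combining the full two-index family of moment relations, and until that combination is actually exhibited the argument remains a plausible reduction plus a plan, not a proof.
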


Let us recall that the Haantjes torsion of a tensor field $K$, of  type $(1,1)$, is a vector-valued 2-form, related to the Nijenhuis torsion of $K$ according to
\begin{equation*}
H_K(\xi,\eta)=T_K(K\xi,K\eta)-KT_K(K\xi,\eta)-KT_K(\xi,K\eta)+K^2T_K(\xi,\eta).
\end{equation*}
The vanishing of the Haantjes torsion is the necessary and sufficient condition for the integrability of the eigendistributions of the tensor field $K$,  as shown by Haantjes in 1955 \cite{Haa} ( under the semisimplicity assumption stated above). 
Therefore, if at least one of the recursion operators of a Haantjes manifold has real and distinct eigenvalues, on the manifold there exists a privileged system of coordinates in which all the recursion operators become diagonal. These coordinates are usually called canonical coordinates (or Riemann invariants) in the theory of the integrable systems of hydrodynamic type.

The above Proposition will not be proved in this paper, since the proof is long, and since the result will not be used afterwards. Its only use is to justify the name of Haantjes manifolds given the manifolds defined above , and to motivate the introduction of the notion of weak Haantjes manifold.

\begin{Definition} A weak Haantjes manifold is a manifold $M$ equipped with a single  exact  1-form $dA$ and with a single tensor field $K$ of type $(1,1)$  satisfying the following three conditions:
\begin{align}
Haantjes( K) &= 0  \\
dd_{K}A &=0\\
d_{K}d_{K}A &= 0
\end{align}

\end{Definition}

The weak Haantjes manifold seems to be the \emph{minimal} and hence \emph{basic geometric structure} underlying the theory of recursion operators. In many examples one finds that  it is possible to extend a weak Haantjes manifold into a full Haantjes manifold by recovering the missing $(n-2)$  recursion operators directly from $K$. This happens, for instance, when the Nijenhuis torsion of $K$ has particularly nice forms. In these cases it is possible to construct the missing tensor fields  $K_{j}$ as polynomial functions of $K$.  
One is, thus,  almost back to the initial situation, when the Nijenhuis torsion of $K$ was supposed to vanish.  The main difference is that one has lost the rule of powers, to which are substituted suitable polynomials in $K$ constructed, case by case,  in such a way as to compensate for the non-vanishing of the torsion of $K$. The study of the problem of the extension of a weak Haantjes manifold into a full Haantjes manifold is a fascinating problem that leads to numerous 
interesting results. However it requires going deeper into the problem of 
classification of  the weak Haantjes manifolds. For the moment there is no more than a reasoned collection of  examples. I have hesitated to give a definition of weak Haantjes manifold. 
However, I may remark that if one adds the condition that the tensor field $K$ can be diagonalized, one readily finds that the conditions $dd_{K}A = 0 $ and $d_{K}d_{K}A = 0$ permit recovering the definition of semihamiltonian systems given by Tsarev in canonical coordinates. Thus, the definition of weak Haantjes manifold can be seen as an intrinsic formulation of  Tsarev's theory. I am convinced that the concept of weak Haantjes manifold has  a central position in the present theory.

\section{Three properties of Haantjes manifolds}

The purpose of this section is to outline the links among the Haantjes manifolds, the Yano-Ako differential concomitant, the topological field theories, and the integrable systems of hydrodynamic type.

The relation with Yano and Ako is quite simple. It is based on the  remark that the recursion operators $K_{j}$ of a Haantjes manifold form an associative and commutative algebra with unity. To prove this property, expand  the 1-form  $K_{j} K_{l} dA$ on the basis of 1-forms $dA_{m} = K_{m} dA$.  Call $ C_{jl}^{m}(A_{i}) $ its components:
\begin{equation}
K_jK_ldA=\sum C^{m}_{jl}dA_m
\end{equation}
By acting with the recursion operator $K_{n}$  on both sides of this equation,  infer the identity
\begin{equation*}
K_{j} K_{l} dA_{n}=\sum C^{m}_{jl}K_{m} dA_{n}.
\end{equation*}
It allows us to conclude that
\begin{equation}
K_{j} K_{l} =\sum C^{m}_{jl}K_{m},
\end{equation}
since the 1-forms $dA_{n} $ form a basis. This relation proves that the recursion operators of a Haantjes manifold form an associative algebra. To proceed towards the Yano-Ako equations, let us notice that the structure constants $ C_{jl}^{m}(A_{i}) $  of this algebra are the partial derivatives of the potential functions  $A_{j l}$ with respect to the coordinates $A_{m}$, as shown by their  definition. Write the Yano-Ako equations in these coordinates, and notice that all the terms cancel in pairs owing to the above property. 
Thus the Yano-Ako equations hold true in the coordinates $A_{m}$. Since they are tensorial , they hold true in any coordinate system. The conclusion is that the structure constants of the algebra of recursion operators  of a Haantjes manifold satisfy  the Yano-Ako equations (or  central system, in the terminology of \cite{KoMa}). This result provides a class of solutions of the Yano-Ako equations having a geometric meaning, but it does not yet solve completely the problem stated in the 
Introduction. It remains unclear how exhaustive this class of solutions may be.

The relation with topological field theories involves the potential functions $A_{j l}$. They are scalar functions on the manifold, and accordingly they can be written in any coordinate system. Nevertheless, the recursion operators select a class of special coordinates on the Haantjes manifold making manifest a rather special property of these functions. To work out this property we need the concept of generator of a Lenard chain.

\begin{Definition} A generator of a Lenard chain on a Haantjes manifold is a vector field $\xi$ such that the iterated vector fields $\xi_{j} = K_{j} \xi$ are linearly independent and commute in pairs.
\end{Definition}

Assume, for the moment, the existence of such a generator, and notice that it provides a distinguished system of coordinates on the Haantjes manifold, since the vector fields $\xi_{j}$ commute. Call  $t_{j}$ the corresponding coordinates:
\begin{equation*}
\xi_j=\frac{\partial}{\partial t^j}.
\end{equation*}
Write the potential functions in these coordinates. Then the following remarkable property holds true.

\begin{Proposition}\label{WDVV}
In the coordinates defined by the generator of a Lenard chain, the matrix of the potential functions of a Haantjes manifold is the Hessian matrix of a function $F(t_{1}, t_{2}, \cdots , t_{n} ) $. This function is a solution of the (generalized) WDVV equations of topological field theories. Any solution of the WDVV equations can be obtained in this way.
\end{Proposition}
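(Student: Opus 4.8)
The plan is to extract everything from a single computation: that in the Lenard-chain coordinates $t^j$ the derivative of the potential matrix is a \emph{totally symmetric} three-index object. First I would compute $\partial A_{jl}/\partial t^m$ directly. Since $\xi_m=\partial/\partial t^m=K_m\xi$, since $dA_{jl}=K_jK_l\,dA$, and since a $(1,1)$ tensor acts on a $1$-form by the adjoint rule $(K\omega)(X)=\omega(KX)$, I obtain $\partial A_{jl}/\partial t^m=dA_{jl}(\xi_m)=dA(K_jK_lK_m\,\xi)$. Because the operators $K_1,\dots,K_n$ commute pairwise, the right-hand side is invariant under every permutation of the indices $(j,l,m)$. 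I would denote this symmetric quantity $T_{jlm}:=dA(K_jK_lK_m\,\xi)$, so that $\partial A_{jl}/\partial t^m=T_{jlm}$ is symmetric in all three indices.

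From this symmetry the Hessian structure follows by two applications of the Poincar\'e lemma. The matrix $(A_{jl})$ is symmetric by definition, and symmetry of $T_{jlm}$ in $(l,m)$ says that, for each fixed $j$, the $1$-form $\sum_l A_{jl}\,dt^l$ is closed; hence $A_{jl}=\partial g_j/\partial t^l$ for some function $g_j$. Then $\partial g_j/\partial t^l=A_{jl}=A_{lj}=\partial g_l/\partial t^j$, so $\sum_j g_j\,dt^j$ is closed and $g_j=\partial F/\partial t^j$ for a single $F$. This gives $A_{jl}=\partial^2F/\partial t^j\partial t^l$, and consequently $T_{jlm}=F_{jlm}:=\partial^3 F/\partial t^j\partial t^l\partial t^m$.

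For the WDVV equations I would invoke the fact, established earlier in this section, that the recursion operators form a commutative associative algebra $K_jK_l=\sum_p C^p_{jl}K_p$. Applying this identity to $\xi$ and pairing with $dA$ yields $F_{jlm}=\sum_p C^p_{jl}\,\eta_{pm}$, where $\eta_{pm}:=dA(K_pK_m\,\xi)=F_{1pm}$ because $K_1=\mathrm{Id}$. Writing $\hat F_j$ for the matrix with entries $(\hat F_j)_{lm}=F_{jlm}$ and $C_j$ for the matrix $(C_j)^p_l=C^p_{jl}$, this reads $\hat F_j=C_j\,\eta$, so $C_j=\hat F_j\,\eta^{-1}$ once $\eta=\hat F_1$ is assumed nondegenerate. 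The matrices $C_j$ commute, since they represent a commutative associative algebra; substituting $C_j=\hat F_j\eta^{-1}$ into $C_jC_l=C_lC_j$ produces $\hat F_j\,\hat F_1^{-1}\,\hat F_l=\hat F_l\,\hat F_1^{-1}\,\hat F_j$, which is the generalized WDVV system with distinguished metric $\eta=\hat F_1$. For the converse I would run the construction backwards: given a solution $F$, set $\eta_{jl}=F_{1jl}$ and $C^m_{jl}=\sum_p F_{jlp}\eta^{pm}$, so that WDVV makes these the structure constants of a commutative associative algebra with unit $e=\partial/\partial t^1$ (one checks $C^m_{1l}=\delta^m_l$). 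Taking the $K_j$ to be the multiplication operators, $\xi=e$, and $A=F_{11}$, one gets $\xi_j=K_j\xi=\partial/\partial t^j$, commuting $K_j$, and, by a short associativity computation, $K_jK_l\,dA=dF_{jl}$, which is exact; hence $M$ is a Haantjes manifold whose Hessian potential is the given $F$.

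The main obstacle is the WDVV step rather than the Hessian step: one must correctly single out $\eta=\hat F_1$ as the metric and turn the \emph{operator} commutativity $C_jC_l=C_lC_j$ into the \emph{matrix} identity $\hat F_j\hat F_1^{-1}\hat F_l=\hat F_l\hat F_1^{-1}\hat F_j$, which also forces the genericity hypothesis that $\eta$ be nondegenerate. In the converse the delicate points are checking that the multiplication operators are genuinely commuting tensor fields and that the specific choice $A=F_{11}$ makes all the doubly iterated forms $K_jK_l\,dA$ exact, so that the reconstructed object really satisfies the definition of a Haantjes manifold.
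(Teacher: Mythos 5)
Your argument is correct, and the first thing to say is that the paper offers no internal proof of this Proposition at all: it states that the result ``has been proved in \cite{Ma}'' and moves on. So there is no in-paper argument to compare against; what you have written is a self-contained substitute for the cited external proof. Your key computation, $\partial A_{jl}/\partial t^m = dA_{jl}(\xi_m) = dA(K_jK_lK_m\xi)=:T_{jlm}$, totally symmetric by pairwise commutativity of the $K_j$, is exactly the mechanism the paper itself exploits later, in Section 4, when it records the identity $c_{jlm}=\xi_j(A_{lm})=\xi_l(A_{mj})=\xi_m(A_{jl})$ for a conformal symmetry; you apply the same observation to a Lenard generator, and the double Poincar\'e-lemma step then delivers the Hessian structure correctly (the only hidden normalization is that the integration constants in the $A_{jl}$ must be chosen so that the matrix is symmetric, which the paper builds into the definition of the matrix potential). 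Two small points. First, with the paper's adjoint convention $(K\omega)(X)=\omega(KX)$ and the index placement $(C_j)^p_{\ l}=C^p_{jl}$, the matrix identity is $\hat F_j=C_j^{T}\eta$; since $\hat F_j$ and $\eta$ are symmetric this equals $\eta C_j$, so your conclusion $\hat F_j\hat F_1^{-1}\hat F_l=\hat F_l\hat F_1^{-1}\hat F_j$ survives, but the transpose is worth tracking. Second, the nondegeneracy of $\eta$ is not really an extra genericity hypothesis: $\eta_{pm}=dA_p(\xi_m)$ is the pairing matrix between the $1$-forms $dA_p$ (already assumed to form a basis when the paper derives $K_jK_l=\sum C^m_{jl}K_m$) and the vector fields $\xi_m$ (linearly independent by the definition of a Lenard generator), hence automatically invertible. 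The converse direction is also sound: $C^m_{1l}=\delta^m_l$ gives $K_j\xi=\partial/\partial t^j$, the computation $(K_jK_l\,dA)(\partial/\partial t^m)=\sum_p C^p_{jm}\eta_{pl}=F_{jlm}$ yields $K_jK_l\,dA=dF_{jl}$ using only the definition of $C^m_{jl}$ and the symmetry of $F_{jlm}$, and the WDVV/associativity condition is precisely what makes the $K_j$ commute pairwise, as the definition of a Haantjes manifold requires.
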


This proposition  has been proved in \cite{Ma} . It subordinates the existence of the function $F(t_{1}, t_{2}, \cdots , t_{n} ) $ to the existence of a generator of a Lenard chain. This problem leads us to the theory of the integrable sytems of hydrodynamic type.

It is well known that there is a one-to-one correspondence between systems of  equations of hydrodynamic type and tensor fields of type $(1,1)$, such as $K$. To pass from the tensor field to the differential equations, it is enough to introduce any coordinate system $u^{j}$ on the manifold,  and to consider the corresponding components of the tensor  field $K$, defined by
\begin{equation}
Kdu^j=\sum K^j_l(u) du^l .
\end{equation}
Then the equations of hydrodynamic type are written in the form
\begin{equation}
\frac{\partial u^j}{\partial t}=\sum K^j_l(u) \frac{\partial u^l}{\partial x}.
\end{equation}
By inverting the steps, one easily passes from the differential equations to the tensor field $K$. The tensorial character of $K$ is guaranteed by the transformation law of the system of differential equations under a change of the unknown functions. 
It is fair to say that the tensor field $K$ gives an intrinsic description of the differential equations, which allows to control the properties of the equations in any coordinate system. On a Haantjes manifold one has $n$ tensor fields $K_{j}$, and therefore $n$ systems of differential equations of hydrodynamic type, each composed of $n$ differential equations.

\begin{Proposition}
The $n$ systems of differential equations of hydrodynamic type associated with the tensor fields $K_{1}=Id, K_{2} , \cdots , K_{n} $ of a Haantjes manifold are mutually compatible, and therefore there exists a solution $ u^{j}( t_{1}, t_{2}, \cdots , t_{n}) $ common to all of them. Furthermore, in the system of coordinates $A_{m}$  the differential equations take the form of  conservation laws.
\end{Proposition}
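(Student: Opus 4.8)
The plan is to work throughout in the distinguished coordinate system provided by the potential functions $A_m$, in which every object becomes explicit. First I would record how the recursion operators look in these coordinates. Using the algebra relation $K_jK_i=\sum_m C^m_{ji}K_m$ established earlier, together with $dA_m=K_m\,dA$, the expansion
\begin{equation*}
K_j\,dA_i = K_jK_i\,dA = \sum_m C^m_{ji}K_m\,dA = \sum_m C^m_{ji}\,dA_m
\end{equation*}
shows that, with respect to the coframe $\{dA_m\}$, the matrix of $K_j$ has entries $(K_j)^i_m=C^m_{ji}$. Since the structure constants are the derivatives $C^m_{ji}=\partial A_{ji}/\partial A_m$ of the potential matrix, the hydrodynamic system attached to $K_j$ reads
\begin{equation*}
\frac{\partial A_i}{\partial t_j} = \sum_m C^m_{ji}\,\frac{\partial A_m}{\partial x} = \sum_m \frac{\partial A_{ij}}{\partial A_m}\,\frac{\partial A_m}{\partial x} = \frac{\partial A_{ij}}{\partial x}.
\end{equation*}
This already settles the last assertion: in the coordinates $A_m$ every equation of every system is a conservation law, with conserved density $A_i$ and flux $A_{ij}$.

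Having the systems in the form $\partial_{t_j}A_i=\partial_x A_{ij}$, I would establish mutual compatibility by comparing the mixed derivatives. Differentiating the conservation law, applying the chain rule, and substituting the flow equations themselves gives
\begin{equation*}
\frac{\partial^2 A_i}{\partial t_k\,\partial t_j} = \frac{\partial}{\partial x}\Big(\sum_m \frac{\partial A_{ij}}{\partial A_m}\,\frac{\partial A_m}{\partial t_k}\Big) = \frac{\partial}{\partial x}\Big(\sum_m C^m_{ij}\,\frac{\partial A_{mk}}{\partial x}\Big) = \frac{\partial}{\partial x}\Big(\sum_{m,p} C^m_{ij}C^p_{mk}\,\frac{\partial A_p}{\partial x}\Big).
\end{equation*}
The difference of the two mixed derivatives is therefore $\partial_x$ of $\sum_{m,p}\big(C^m_{ij}C^p_{mk}-C^m_{ik}C^p_{mj}\big)\partial_x A_p$.

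The key point is that $\sum_m C^m_{ij}C^p_{mk}$ is exactly the structure constant of the triple product $K_iK_jK_k$ in the algebra of recursion operators. Because this algebra is commutative and associative, the triple product is totally symmetric in its three factors, so the coefficient is symmetric under $j\leftrightarrow k$ and the bracketed expression vanishes identically. This symmetry is precisely the associativity condition \eqref{associativity}, already available since the $K_j$ form an associative commutative algebra. Hence $\partial_{t_k}\partial_{t_j}A_i=\partial_{t_j}\partial_{t_k}A_i$ for all $i,j,k$, i.e.\ the $n$ flows commute. Finally I would deduce the existence of a common solution from this commutativity: the $n$ evolutionary systems define commuting flows on the space of field configurations, and commuting flows can be integrated simultaneously. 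Starting from arbitrary data $u(x,0,\dots,0)$ one integrates the flows one time-variable at a time; the compatibility just proved guarantees independence of the order of integration, so the result assembles into a single function $u^j(t_1,\dots,t_n)$ (with $x$ identifiable with the $t_1$-variable, since $K_1=\mathrm{Id}$ forces $\partial_{t_1}A_i=\partial_x A_i$) solving all $n$ systems at once.

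I expect the main obstacle to be the careful bookkeeping in the compatibility computation: keeping the index placement straight so as to identify $(K_j)^i_m$ with $C^m_{ji}=\partial A_{ij}/\partial A_m$ correctly, and checking that both the second-order ($\partial_{xx}$) terms and the quadratic first-order ($\partial_x A\,\partial_x A$) terms cancel. In the conservation-law formulation both cancellations collapse into the single symmetry statement $\sum_m C^m_{ij}C^p_{mk}=\sum_m C^m_{ik}C^p_{mj}$, which is the algebraic heart of the argument; the remaining analytic step, passing from commuting flows to a genuine common solution, is standard and I would invoke it rather than prove it in detail.
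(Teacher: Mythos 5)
Your proof is correct, but it runs in the opposite direction from the paper's and rests on a different algebraic identity. You work entirely in the distinguished coordinates $A_m$: you first derive the conservation-law form $\partial_{t_j}A_i=\partial_x A_{ij}$ (which the paper treats as an afterthought, "write them explicitly in the $A_m$ coordinates"), and then verify compatibility by direct cross-differentiation, reducing everything to the symmetry of $\sum_m C^m_{ij}C^p_{mk}$ under $j\leftrightarrow k$, i.e.\ to the associativity condition \eqref{associativity} for the structure constants. The paper instead proves compatibility first and intrinsically: it reduces the commutativity of the flows to the vector-field identity $[K_j\xi,K_l\xi]-K_j[\xi,K_l\xi]-K_l[K_j\xi,\xi]=0$ for arbitrary $\xi$, and verifies it by pairing with the coframe $dA_m$, using only the relation $K_jdA_m=dA_{jm}$ and the symmetry $\xi_j(A_{lm})=\xi_l(A_{jm})$; associativity of the $C$'s never appears explicitly there. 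Your version buys concreteness and a self-contained check of the mixed-derivative condition (the conservation-law form makes both the $\partial_{xx}$ and the quadratic $\partial_xA\,\partial_xA$ cancellations collapse into one total $x$-derivative, which is a genuine simplification); the paper's version buys coordinate-independence and makes visible that the compatibility criterion is a tensorial statement about the $K_j$ themselves. Since compatibility is coordinate-independent, verifying it in the $A_m$ chart suffices, so there is no gap; the only point you rightly flag as "invoked rather than proved" --- passing from commuting evolutionary flows to an actual common solution --- is treated with exactly the same brevity in the paper.
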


\begin{proof} To prove the compatibility of the $n$ systems of PDEs, it is necessary and sufficient to prove that the tensor fields $K_{j}$ satisfy the following identity 
\begin{equation}
[K_{j}\xi, K_{l}\xi]-K_j[\xi, K_{l}\xi]-K_{l}[K_{j}\xi, \xi]=0
\end{equation}
for any choice of the vector field $\xi$. This identity assures the equality of the second-order mixed derivatives of the field functions $u^{j}$  with respect to the independent variables  ${t^{k}}$, on account  of the commutativity of the tensor fields $K_{j}$. 
To prove the identity, it is useful to evaluate the above vector expression on the basis of the differentials $dA_{m}$, in order to use the basic relation $K_{j} dA_{m} = dA_{j m} $. Let us denote by $\xi_{j}$, as before,  the vector field $K_{j} \xi$, knowing that these vector fields do not commute  since $\xi$ is  not assumed to be a generator of a Lenard chain. Keep in mind that $\xi_{l} (A_{j m} ) - \xi_{j} (A_{l m} ) = 0$ since the tensor fields $K_{j}$ commute.  The identity is then proved as follows:
\begin{align*}
&dA_m([K_{j}\xi, K_{l}\xi]-K_j[\xi, K_{l}\xi]-K_{l}[K_{j}\xi, \xi])\\
&=\xi_j\xi_l(A_m)-\xi_l\xi_j(A_m)-\xi \xi_l(A_{jm}) + \xi_l \xi (A_{jm})-\xi_j\xi(A_{lm})+\xi\xi_j(A_{lm})\\
&=\xi_j\xi(A_{lm})-\xi_l\xi(A_{jm})-\xi \xi_l(A_{jm}) + \xi_l \xi (A_{jm})-\xi_j\xi(A_{lm})+\xi\xi_j(A_{lm})\\
&=\xi(\xi_j (A_{lm})-\xi_j(A_{jm}))=0.
\end{align*}
The existence of a common solution is thus established. To see that the differential equations can be written as conservation laws, it is enough to write them explicitly in the $A_{m}$ coordinates and to use again the basic relation  $K_{j} dA_{m} = dA_{j m} $. 
\end{proof}

Now we are in a position to discuss the problem of the existence of the generators of Lenard chains on a Haantjes manifold. The tool is the common solution $ u^{j}( t_{1}, t_{2}, \cdots , t_{n}) $ of the differential equations, whose existence has just been established. Let us regard this solution as the definition of a change of coordinates on the manifold $M$, from the old coordinates $ u^{j} $ to the new coordinates $t^{k}$.  
Let us denote by $\dfrac{\partial}{\partial t^{k}}$ the vector fields of the corresponding  basis in $TM$. It is almost a tautology to see that these vector fields form a Lenard chain, due to the form of the differential equations. Thus, one can say that  there is a one-to-one correspondence between the solutions of the systems of differential equations of hydrodynamic type associated with the tensor fields $K_{j}$ and the Lenard chains of vector fields on a Haantjes manifold. Combined with   Proposition \ref{WDVV} , this remark shows that a solution of the WDVV equations is 
associated 
with any solution of the system of hydrodynamic type (and viceversa). This is one of the possible ways of introducing the Hirota tau function in the present framework.

\section{Haantjes manifolds with symmetry}

There is a second class of vector fields worth attention on a Haantjes manifold, apart from the generators of Lenard chains. They are the conformal symmetries of the manifold. 

\begin{Definition}
A vector field $\xi$  such that the Lie derivatives of the 1-form $dA$ and of the tensor fields $K_{j}$ along $\xi$ are  multiples of $dA$ and $K_{j}$ respectively,
\begin{gather}
Lie_{\xi}(dA)=\alpha\cdot dA\\
Lie_{\xi}(K_j)=\gamma_j \cdot K_j  ,
\end{gather}
 is called a conformal symmetry of the Haantjes manifold. It is a symmetry if the functions $\alpha$ and  $\gamma_{j}$   vanish.
\end{Definition}

As before, we denote by $\xi_{j}$ the vector fields $K_{j} \xi$ . They form a basis in $TM$, without defining a system of coordinates on $M$  since they do not commute. We use this basis, the conformal symmetry, and the potential functions $A_{j l} $ to define a second-order symmetric tensor field on $M$ by setting:

\begin{equation}
g ( \xi_{j} , \xi_{l} ) = \xi( A_{j l } ) .
\end{equation}
Explicitly, this means that the components of the tensor field $g$ on the basis $\xi_{j}$ are the derivatives of the potential functions $A_{j l} $ along the conformal symmetry $\xi$. In this section we prove the following remarkable property of this tensor field.

\begin{Proposition}\label{pr:RiemannMetric}
 Assume that the matrix $\xi( A_{j l } )$  is nonsingular, and that the functions $\alpha$ and $\gamma_j$ are constant. Then $g$ is a flat semiriemannian metric on $M$.
 \end{Proposition}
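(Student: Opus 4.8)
The plan is to read the defining relation $g(\xi_j,\xi_l)=\xi(A_{jl})$ as a statement about gradients, and then to exploit the constancy of $\alpha$ and $\gamma_j$ to compute the Levi--Civita connection and the curvature of $g$ in the frame $\xi_j=K_j\xi$. That $g$ is a metric is immediate: it is symmetric because the matrix potential is symmetric, $A_{jl}=A_{lj}$, and it is nondegenerate because the $\xi_j$ form a frame and the matrix $\xi(A_{jl})$ is nonsingular by hypothesis. The first genuine step is the identity $dA_m(\xi_l)=g(\xi_l,\xi_m)$, which I would obtain from $dA_m(\xi_l)=(K_l\,dA_m)(\xi)=dA_{lm}(\xi)=\xi(A_{lm})$, using $dA_m=K_m\,dA$ and $K_lK_m\,dA=dA_{lm}$. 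Since the $\xi_l$ span $TM$ this reads $dA_m=g(\xi_m,\cdot)$: each frame field $\xi_m=K_m\xi$ is the $g$-gradient of the potential $A_m$, and in the chart $(A_1,\dots,A_n)$ the cometric is $g^{-1}(dA_j,dA_l)=\xi(A_{jl})$.

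Next I would extract the homogeneity built into the hypotheses. From $Lie_{\xi}(dA)=\alpha\,dA$, $Lie_{\xi}K_j=\gamma_j K_j$ and the Leibniz rule one gets $Lie_{\xi}(dA_{jl})=(\alpha+\gamma_j+\gamma_l)\,dA_{jl}$, so that, $\alpha$ and $\gamma_j$ being constant, $\xi(A_{jl})=(\alpha+\gamma_j+\gamma_l)A_{jl}$ up to an additive constant and $Lie_{\xi}g=\alpha\,g$; thus $\xi$ is a homothety of $g$. The same rule gives $[\xi,\xi_j]=\gamma_j\xi_j$, and feeding this into the compatibility identity established in the previous Proposition, $[K_j\xi,K_l\xi]=K_j[\xi,K_l\xi]+K_l[K_j\xi,\xi]$, together with $K_jK_l=\sum_m C^m_{jl}K_m$ and the commutativity \eqref{commutativity}, yields the closed structure equations
\[ [\xi_j,\xi_l]=(\gamma_l-\gamma_j)\sum_m C^m_{jl}\,\xi_m . \]

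With the frame metric $g_{jl}=\xi(A_{jl})$, the derivatives $\xi_k(g_{jl})=(\alpha+\gamma_j+\gamma_l)\sum_s C^s_{jl}\,g_{ks}$ (immediate from the gradient identity and the homogeneity above), and the brackets just found, every Christoffel symbol $g(\nabla_{\xi_j}\xi_l,\xi_m)$ becomes explicit through the Koszul formula. The remaining and decisive step is to insert these into the Riemann tensor and show that it vanishes. Here the algebraic identities available are exactly the ones one needs: the commutativity $C^m_{jl}=C^m_{lj}$, the associativity (Yano--Ako) relations \eqref{associativity} for the structure constants $C^m_{jl}=\partial A_{jl}/\partial A_m$, and their integrability $\partial_p C^m_{jl}=\partial_m C^p_{jl}$; the constancy of $\alpha,\gamma_j$ keeps the scaling weights constant, which is what allows these relations to close up. Conceptually this is the statement that $g$ is the intersection form of the Frobenius/WDVV structure carried by $(A_{jl},\xi)$, with $\xi$ in the role of the Euler field, and that such a form is flat.

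I expect this last flatness computation to be the only real obstacle; everything preceding it is bookkeeping once the gradient identity and the bracket formula are in place. The art will be to group the contractions of the associativity relations so that the second-derivative terms and the quadratic Christoffel terms in the curvature cancel in pairs, in the same spirit in which the Yano--Ako terms were seen to cancel in Section~3.
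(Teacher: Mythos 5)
Your setup coincides with the paper's: the frame $\xi_j=K_j\xi$, the gradient identity $dA_m=g(\xi_m,\cdot)$ (equivalently the paper's expansion $\xi_j=\sum_m g_{jm}\,\partial/\partial A_m$), the scaling law $Lie_{\xi}(dA_{jl})=(\alpha+\gamma_j+\gamma_l)\,dA_{jl}$ giving $\xi_k(g_{jl})=(\alpha+\gamma_j+\gamma_l)\,\xi_k(A_{jl})$, and the bracket $[\xi_j,\xi_l]=(\gamma_l-\gamma_j)\sum_m C^m_{jl}\xi_m$, which is the paper's commutator Lemma written with structure constants instead of the lowered coefficients $c_{jlm}=\xi_j(A_{lm})$. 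Your derivation of that bracket from the compatibility identity $[K_j\xi,K_l\xi]=K_j[\xi,K_l\xi]+K_l[K_j\xi,\xi]$ of the preceding Proposition, combined with $[\xi,\xi_j]=\gamma_j\xi_j$ and $K_jK_l=\sum_m C^m_{jl}K_m$, is a legitimate and in fact slightly cleaner route than the paper's direct evaluation of $dA_m([\xi_j,\xi_l])$.

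The gap is that the proposal stops exactly where the content of the proposition lies. ``Insert these into the Riemann tensor and show that it vanishes,'' together with the hope that ``the terms cancel in pairs,'' is a plan, not a proof: no curvature component is computed, and the remark that $g$ is the intersection form of a Frobenius structure and ``such a form is flat'' imports the conclusion from the very theory being re-derived. What the paper actually does at this point, and what you still owe: (i) the Koszul formula yields the closed form $g(\nabla_{\xi_j}\xi_l,\xi_m)=(\frac{\alpha}{2}+\gamma_l)\,c_{jlm}$, with the asymmetric weight $\frac{\alpha}{2}+\gamma_l$ that is not guessable in advance and is essential downstream; (ii) the second-derivative terms $\xi_j(c_{lmp})-\xi_l(c_{jmp})$ appearing in $\nabla_{\xi_j}\nabla_{\xi_l}-\nabla_{\xi_l}\nabla_{\xi_j}$ equal $[\xi_j,\xi_l](A_{mp})$ and therefore cancel against the $\nabla_{[\xi_j,\xi_l]}$ term, leaving only a quadratic remainder; (iii) that remainder is $R_{mpjl}=(\frac{\alpha}{2}+\gamma_m)(\frac{\alpha}{2}+\gamma_p)\sum_{s,t}g^{st}(c_{jms}c_{lpt}-c_{jpt}c_{lms})$, and the identity $\sum_s g^{st}c_{jms}=\partial A_{jm}/\partial A_t=C^t_{jm}$ converts it into $(\frac{\alpha}{2}+\gamma_m)(\frac{\alpha}{2}+\gamma_p)\sum_{q,t}g_{pq}(C^q_{lt}C^t_{jm}-C^q_{jt}C^t_{lm})$, which vanishes by the associativity of the algebra of recursion operators. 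Until steps (i)--(iii), or an equivalent computation, are carried out, the flatness --- which is the entire claim --- remains unproved.
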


The proof of this Proposition is split into four lemmas. The statement of these lemmas is made easier by introducing the symbols $g_{jl}$ for the components $g( \xi_{j},\xi_{l})$ of the metric, and the functions 
\begin{equation}
c_{jlm}:=\xi_j(A_{lm})=\xi_l(A_{mj})=\xi_m(A_{jl}),
\end{equation}
as shorthand notation for the derivatives of the potential functions $A_{jl}$ along the vector fields $\xi_{m}$ of the basis generated by the conformal symmetry.

The first lemma yields an expression for the commutators of the vectors $\xi_j$ of the basis.
\begin{Lemma}\label{lemma:commutatore}
 \begin{equation}
  [\xi_j,\xi_l]=(\gamma_l-\gamma_j) \sum c_{jlm}\frac{\partial}{\partial A_m}.
 \end{equation}
\end{Lemma}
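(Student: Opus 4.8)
The plan is to reduce the bracket $[\xi_j,\xi_l]=[K_j\xi,K_l\xi]$ to a single pointwise tensorial expression by exploiting two ingredients already at hand: the conformal symmetry condition $\mathrm{Lie}_{\xi}(K_j)=\gamma_j K_j$, and the compatibility identity verified in the proof of the previous Proposition. First I would extract from the symmetry condition the elementary relation $[\xi,\xi_j]=\gamma_j\xi_j$. Indeed, evaluating the tensor identity $\mathrm{Lie}_{\xi}(K_j)=\gamma_j K_j$ on $\xi$ itself gives $(\mathrm{Lie}_{\xi}K_j)(\xi)=[\xi,K_j\xi]-K_j[\xi,\xi]=[\xi,\xi_j]$, whence $[\xi,\xi_j]=\gamma_j K_j\xi=\gamma_j\xi_j$, and by antisymmetry of the bracket $[\xi_j,\xi]=-\gamma_j\xi_j$.

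Next I would invoke the compatibility identity $[K_j\xi,K_l\xi]=K_j[\xi,K_l\xi]+K_l[K_j\xi,\xi]$, which holds for every vector field on a Haantjes manifold and was established a few lines above. Substituting the two bracket relations just obtained, and using that each $K_m$ is $C^{\infty}(M)$-linear (so the factors $\gamma_j,\gamma_l$ pass freely through the tensors) together with the commutativity $K_jK_l=K_lK_j$, the right-hand side collapses: $K_j(\gamma_l\xi_l)+K_l(-\gamma_j\xi_j)=\gamma_l K_jK_l\xi-\gamma_j K_lK_j\xi=(\gamma_l-\gamma_j)K_jK_l\xi$. Thus $[\xi_j,\xi_l]=(\gamma_l-\gamma_j)K_jK_l\xi$, and it only remains to resolve the vector $K_jK_l\xi$ on the coordinate basis.

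Finally I would read off the components of $K_jK_l\xi$ in the coordinates $A_m$. Since the $dA_m$ form a basis, any vector $V$ is recovered from $V=\sum_m V(A_m)\,\partial/\partial A_m$; here $V(A_m)=dA_m(K_jK_l\xi)=(K_jK_l\,dA_m)(\xi)=(K_jK_lK_m\,dA)(\xi)$, which by the commutativity of the recursion operators is precisely the totally symmetric quantity $c_{jlm}=\xi_j(A_{lm})$. Hence $K_jK_l\xi=\sum_m c_{jlm}\,\partial/\partial A_m$, and the claimed formula follows at once. The only genuinely delicate point is the bookkeeping: one must check that the $\gamma$'s factor cleanly out of the tensors (they do, by tensoriality, even without assuming them constant), and that the defining relation $K_j\,dA_m=dA_{jm}$ is applied correctly to identify $(K_jK_lK_m\,dA)(\xi)$ with $c_{jlm}$, the total symmetry of $c_{jlm}$ in its three indices being exactly what makes this identification consistent.
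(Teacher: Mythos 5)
Your proof is correct, but it follows a genuinely different route from the paper's. The paper proves the Lemma by a direct computation of $dA_m([\xi_j,\xi_l])=\xi_j\xi_l(A_m)-\xi_l\xi_j(A_m)$: it rewrites $\xi_l(A_m)$ as $\xi(A_{lm})$, converts the resulting expression into $[\xi_j,\xi](A_{lm})-[\xi_l,\xi](A_{jm})$ using the symmetry $\xi_j(A_{lm})=\xi_l(A_{jm})$, and then applies $[\xi,\xi_j]=\gamma_j\xi_j$. You instead take the compatibility identity $[K_j\xi,K_l\xi]=K_j[\xi,K_l\xi]+K_l[K_j\xi,\xi]$, already established for an arbitrary vector field in the proof of the preceding Proposition, as a black box; combined with $[\xi,\xi_j]=\gamma_j\xi_j$ and the $C^{\infty}(M)$-linearity and commutativity of the $K_j$, it gives the coordinate-free intermediate formula $[\xi_j,\xi_l]=(\gamma_l-\gamma_j)K_jK_l\xi$, after which the expansion $K_jK_l\xi=\sum_m c_{jlm}\,\partial/\partial A_m$ follows from $dA_m(K_jK_l\xi)=(K_jK_lK_m\,dA)(\xi)=c_{jlm}$. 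The two arguments rest on the same ingredients (the relation $[\xi,\xi_j]=\gamma_j\xi_j$ and the total symmetry of $c_{jlm}$, the latter hidden inside the compatibility identity), but yours is more modular and yields the tensorial form of the commutator as a by-product, whereas the paper's is self-contained and does not lean on the earlier Proposition. Your side remarks are also accurate: constancy of $\alpha$ and the $\gamma_j$ is not needed here, and the identification $dA_m(K_jK_l\xi)=c_{jlm}$ does require commutativity of the $K_j$, since moving the operators from the vector to the form reverses their order.
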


\begin{proof}
Since $\xi$ is a conformal symmetry, 
\begin{equation*}
[\xi,\xi_j]=Lie_{\xi}(K_j\xi)=\gamma_j\xi_j.
\end{equation*}
Consequently:
\begin{align*}
 dA_m([\xi_j,\xi_l])&= \xi_j\xi_l(A_m)-\xi_l\xi_j(A_m)\\
 &=\xi_j \xi(A_{lm}) -\xi_l \xi(A_{jm}) \\
 &=[\xi_j,\xi] (A_{lm}) -[\xi_l, \xi](A_{jm})\\
 &=(\gamma_l-\gamma_j) c_{jlm}
\end{align*}

\end{proof}

The second lemma specifies the value of the derivatives of the components of the metric along the vectors fields $\xi_j$, and also the value of the metric on the commutators $[\xi_j,\xi_l]$.

\begin{Lemma}\label{lemma:derivata}
 \begin{gather}
  g(\xi_m,[\xi_j,\xi_l]) = (\gamma_l-\gamma_j)c_{jlm}\\
  \xi_m (g_{jl}) = (\alpha +\gamma_j+\gamma_j)c_{jlm}
 \end{gather}
\end{Lemma}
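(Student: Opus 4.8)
The plan is to treat the two claims separately, reducing each to an identity that is already available. For the first formula, the key observation I would make is that the linear functional $g(\xi_m,\,\cdot\,)$ coincides with the $1$-form $dA_m$. Indeed, using $\xi_j=K_j\xi$ together with the basic relation $K_j dA_m=dA_{jm}$, one finds $dA_m(\xi_j)=\xi_j(A_m)=\xi(A_{jm})=g_{jm}$; since $g$ is symmetric, $g(\xi_m,\xi_j)=g_{mj}=dA_m(\xi_j)$ for every $j$, and because the $\xi_j$ form a basis of $TM$ the two functionals agree on all vectors. Evaluating this equality on the vector field $[\xi_j,\xi_l]$ gives $g(\xi_m,[\xi_j,\xi_l])=dA_m([\xi_j,\xi_l])$, and the right-hand side was already computed, in the course of the proof of Lemma~\ref{lemma:commutatore}, to equal $(\gamma_l-\gamma_j)c_{jlm}$. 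This settles the first formula with essentially no further work.

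For the second formula I would first determine how the exact $1$-forms $dA_{jl}=K_jK_l dA$ transform under the conformal symmetry. The tool is the Leibniz rule $Lie_\xi(K\omega)=(Lie_\xi K)\omega+K\,Lie_\xi\omega$ for the Lie derivative of a $1$-form acted on by a $(1,1)$ tensor; applying it twice to $K_jK_l dA$ and inserting the conformal conditions $Lie_\xi K_j=\gamma_j K_j$ and $Lie_\xi dA=\alpha\,dA$, while using that a tensor is $C^\infty$-linear on forms, yields the scaling law $Lie_\xi(dA_{jl})=(\alpha+\gamma_j+\gamma_l)\,dA_{jl}$, in which the three summands $\alpha$, $\gamma_j$, $\gamma_l$ arise respectively from differentiating $dA$, $K_j$, and $K_l$.

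I would then convert this into a statement about the metric components. Since $dA_{jl}$ is closed, Cartan's formula reduces its Lie derivative to $Lie_\xi(dA_{jl})=d(\xi(A_{jl}))=dg_{jl}$, using $g_{jl}=\xi(A_{jl})$ by definition. Combining with the scaling law gives the $1$-form identity $dg_{jl}=(\alpha+\gamma_j+\gamma_l)\,dA_{jl}$, and evaluating it on $\xi_m$, via $dA_{jl}(\xi_m)=\xi_m(A_{jl})=c_{jlm}$, produces exactly $\xi_m(g_{jl})=(\alpha+\gamma_j+\gamma_l)c_{jlm}$ (so that the coefficient stated in the Lemma should read $\gamma_j+\gamma_l$, the symmetry of $g_{jl}$ and $c_{jlm}$ in $j,l$ confirming that $\alpha+2\gamma_j$ cannot be correct).

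The index bookkeeping is routine; the step that deserves the most care, and which I regard as the crux, is the derivation of the scaling law $Lie_\xi(dA_{jl})=(\alpha+\gamma_j+\gamma_l)dA_{jl}$. One must apply the Leibniz rule in the right order and verify that acting with $K_j$ on the already-scaled form $K_l\,dA$ reproduces the factors $\gamma_l$ and $\alpha$ cleanly. I would also remark that the constancy of $\alpha$ and of the $\gamma_j$ is \emph{not} needed for this particular lemma: it enters only in the subsequent lemmas that upgrade these formulas to the flatness of $g$.
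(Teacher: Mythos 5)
Your proof is correct and follows essentially the same route as the paper: the first identity via the expansion $\xi_j=\sum g_{jm}\tfrac{\partial}{\partial A_m}$ (equivalently $g(\xi_m,\cdot)=dA_m$) combined with Lemma~\ref{lemma:commutatore}, and the second via the scaling law $Lie_\xi(dA_{jl})=(\alpha+\gamma_j+\gamma_l)dA_{jl}$ together with $dg_{jl}=Lie_\xi dA_{jl}$. You are also right that the coefficient in the statement should read $\alpha+\gamma_j+\gamma_l$ (a typo in the paper, as its own proof confirms) and that constancy of $\alpha$ and $\gamma_j$ is not yet needed here.
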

\begin{proof}
The first equation is a simple consequence of the first Lemma and of the formula :
 \begin{equation}\label{eq:basis}
  \xi_j = \sum g_{jm}\frac{\partial}{\partial A_m} ,
 \end{equation}
giving the expansion of the vector fields $\xi_j$ on the basis associated to the coordinates $A_m$.
This expansion follows immediately from  the definition of the vector fields $\xi_j$ and of the metric $g$ .

To prove the second equation one notices that:
\begin{equation*}
 Lie_{\xi} d A_{jl} = Lie_{\xi}(K_jK_l dA) = (\alpha+\gamma_j+\gamma_l)dA_{jl}.
\end{equation*}
Consequently:
\begin{equation*}
 d g_{jl} = d \xi (A_{jl}) = Lie_{\xi} d A_{jl}  = (\alpha+\gamma_j+\gamma_l)dA_{jl}.
\end{equation*}
The last equation gives the statement, since : 
\begin{equation*}
 \xi_m(g_{jl})  = (\alpha+\gamma_j+\gamma_l)\xi_m(A_{jl}).
\end{equation*}
\end{proof}
\medskip

The first two lemmas allow us to compute the coefficients of the Levi-Civita connection of $g$ on the basis $\xi_j$. One must use the Koszul formula \cite[p. 61]{ON}:
\begin{equation*}
\begin{aligned}
 2g(\nabla_{\xi_j}\xi_l,\xi_m) &= \xi_j g(\xi_l,\xi_m) + \xi_l g(\xi_j,\xi_m)- \xi_m g(\xi_j,\xi_l) \\
 &\ - g(\xi_j,[\xi_l,\xi_m]) + g(\xi_l,[\xi_m,\xi_j]) + g(\xi_m,[\xi_j,\xi_l])
\end{aligned}
\end{equation*}
\medskip
\begin{Lemma}\label{lemma:nabla}
 The coefficients of the Levi-Civita connection on the basis $\xi_j$ are given by the formula:
 \begin{equation} 
  \nabla_{\xi_j}\xi_l=(\frac{\alpha}{2}+\gamma_l)\sum c_{jlm}\frac{\partial}{\partial A_m}
 \end{equation}
or equivalently, by the formula:
\begin{equation}
 g(\nabla_{\xi_j}\xi_l,\xi_m) = (\frac{\alpha}{2}+\gamma_l) c_{jlm}.
\end{equation}
\end{Lemma}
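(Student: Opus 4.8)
The plan is to substitute the two equations of Lemma~\ref{lemma:derivata} into the Koszul formula and let the total symmetry of the coefficients $c_{jlm}$ collapse everything onto a single term. First I would rewrite each of the six summands on the right-hand side of the Koszul formula as a multiple of the one quantity $c_{jlm}$. The three ``metric'' summands are controlled by the second equation of Lemma~\ref{lemma:derivata}: for instance $\xi_j g(\xi_l,\xi_m)=\xi_j(g_{lm})=(\alpha+\gamma_l+\gamma_m)c_{lmj}$, and analogously for the other two, after which the total symmetry $c_{lmj}=c_{jml}=c_{jlm}$ lets me extract the common factor $c_{jlm}$. The three ``commutator'' summands are controlled by the first equation of Lemma~\ref{lemma:derivata}, reading off the correct index rotation in each case; e.g. $g(\xi_j,[\xi_l,\xi_m])=(\gamma_m-\gamma_l)c_{lmj}=(\gamma_m-\gamma_l)c_{jlm}$, and similarly for $g(\xi_l,[\xi_m,\xi_j])$ and $g(\xi_m,[\xi_j,\xi_l])$.

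Once every summand carries the factor $c_{jlm}$, the proof reduces to a purely numerical bookkeeping of the $\alpha$ and $\gamma$ coefficients. The three metric terms contribute $(\alpha+\gamma_l+\gamma_m)+(\alpha+\gamma_j+\gamma_m)-(\alpha+\gamma_j+\gamma_l)=\alpha+2\gamma_m$, while the three commutator terms contribute $-(\gamma_m-\gamma_l)+(\gamma_j-\gamma_m)+(\gamma_l-\gamma_j)=2\gamma_l-2\gamma_m$. The $\gamma_m$ and the $\gamma_j$ contributions cancel, leaving the combined coefficient $\alpha+2\gamma_l$, so that
\[
2\,g(\nabla_{\xi_j}\xi_l,\xi_m)=(\alpha+2\gamma_l)\,c_{jlm},
\]
which is the second (covariant) form stated in the Lemma. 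I expect this cancellation to be the one place demanding genuine care: the six inputs are visibly symmetric in the three indices $j,l,m$, yet the outcome depends on $\gamma_l$ alone (the index of the field being differentiated), so one must track every $\gamma$ separately and resist the temptation to assume a symmetric answer.

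Finally, to pass from the covariant identity $g(\nabla_{\xi_j}\xi_l,\xi_m)=(\tfrac{\alpha}{2}+\gamma_l)c_{jlm}$ to the explicit expansion of $\nabla_{\xi_j}\xi_l$ on the basis $\tfrac{\partial}{\partial A_m}$, I would invert the expansion $\xi_j=\sum g_{jm}\tfrac{\partial}{\partial A_m}$ of \eqref{eq:basis}, which is legitimate precisely because the matrix $g_{jl}=\xi(A_{jl})$ is nonsingular by the hypothesis of Proposition~\ref{pr:RiemannMetric}. Writing $\tfrac{\partial}{\partial A_m}=\sum (g^{-1})_{mp}\xi_p$ and pairing with $\xi_l$ gives the duality relation $g(\tfrac{\partial}{\partial A_m},\xi_l)=\delta_{ml}$, so the coordinate fields $\tfrac{\partial}{\partial A_m}$ are $g$-dual to the basis $\xi_l$. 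Consequently, expanding $\nabla_{\xi_j}\xi_l=\sum a_m\tfrac{\partial}{\partial A_m}$ and pairing with $\xi_m$ recovers $a_m=g(\nabla_{\xi_j}\xi_l,\xi_m)=(\tfrac{\alpha}{2}+\gamma_l)c_{jlm}$, which yields the first form of the Lemma. This last step is routine once the duality relation is in hand, and it is the natural bridge making the two displayed formulas equivalent.
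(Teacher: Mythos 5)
Your proof is correct and follows exactly the route the paper intends: the paper's own proof of this Lemma is the one-line remark that it is ``a simple application of the previous two Lemmas and of the Koszul formula,'' and you have simply carried out that computation in full, including the correct reading $(\alpha+\gamma_j+\gamma_l)$ of the (typographically garbled) second equation of Lemma~\ref{lemma:derivata} and the $g$-duality of $\tfrac{\partial}{\partial A_m}$ with $\xi_l$ needed to pass between the two displayed forms. The bookkeeping yielding $\alpha+2\gamma_l$ checks out.
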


\begin{proof}
 The proof is a simple application of the previous two Lemmas and of the Koszul formula.
\end{proof}
\medskip
We are now in a position to compute finally the Riemann tensor
\begin{equation*}
 R_{\xi_j\xi_l}(\xi_m)=(\nabla_{\xi_j}\nabla_{\xi_l}-\nabla_{\xi_l}\nabla_{\xi_j}-\nabla_{[\xi_j,\xi_l]})(\xi_m).
\end{equation*}

\medskip
\begin{Lemma}
 The covariant components of the Riemann tensor on the basis $\xi_j$ are given by:
 \begin{equation}
  R_{mpjl} = (\frac{\alpha}{2} +\gamma_m)(\frac{\alpha}{2} +\gamma_p) \sum_{s,t} g^{st}(c_{jms}c_{lpt}-c_{jpt}c_{lms}).
 \end{equation}

\end{Lemma}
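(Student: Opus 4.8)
The plan is to compute the covariant component directly from the definition, reading $R_{mpjl}$ as $g(R_{\xi_j\xi_l}(\xi_m),\xi_p)$, and to reduce every term to quantities already evaluated in the three preceding lemmas. Expanding the curvature operator gives three pieces,
\begin{equation*}
g(R_{\xi_j\xi_l}\xi_m,\xi_p)=g(\nabla_{\xi_j}\nabla_{\xi_l}\xi_m,\xi_p)-g(\nabla_{\xi_l}\nabla_{\xi_j}\xi_m,\xi_p)-g(\nabla_{[\xi_j,\xi_l]}\xi_m,\xi_p),
\end{equation*}
and the first move is to eliminate the awkward double covariant derivatives by metric compatibility, writing
\begin{equation*}
g(\nabla_{\xi_j}\nabla_{\xi_l}\xi_m,\xi_p)=\xi_j\big(g(\nabla_{\xi_l}\xi_m,\xi_p)\big)-g(\nabla_{\xi_l}\xi_m,\nabla_{\xi_j}\xi_p),
\end{equation*}
and symmetrically for the $j\leftrightarrow l$ term. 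This trades the second derivatives for directional derivatives of the connection coefficients, which Lemma \ref{lemma:nabla} furnishes in closed form, together with bilinear pairings of two connection vectors.

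The bilinear pairings already produce the stated answer. Using Lemma \ref{lemma:nabla} in the coordinate form $\nabla_{\xi_l}\xi_m=(\frac{\alpha}{2}+\gamma_m)\sum_s c_{lms}\,\partial/\partial A_s$, together with the identity $g(\partial/\partial A_s,\partial/\partial A_t)=g^{st}$ (obtained by inverting the expansion \eqref{eq:basis}, which shows that the matrix $g(\partial/\partial A_s,\partial/\partial A_t)$ is the inverse of $g_{jl}$), the combination $g(\nabla_{\xi_j}\xi_m,\nabla_{\xi_l}\xi_p)-g(\nabla_{\xi_l}\xi_m,\nabla_{\xi_j}\xi_p)$ collapses exactly to $(\frac{\alpha}{2}+\gamma_m)(\frac{\alpha}{2}+\gamma_p)\sum_{s,t}g^{st}(c_{jms}c_{lpt}-c_{jpt}c_{lms})$, which is the right-hand side of the claim.

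It remains to show that the leftover directional-derivative terms cancel the commutator term; this is the only delicate point. Because $\alpha$ and the $\gamma_j$ are constant, the scalar coefficient $\frac{\alpha}{2}+\gamma_m$ factors out of $\xi_j$, so the leftover terms assemble into $(\frac{\alpha}{2}+\gamma_m)\big(\xi_j(c_{lmp})-\xi_l(c_{jmp})\big)$. Using the definition $c_{lmp}=\xi_l(A_{mp})$ one recognizes this as $(\frac{\alpha}{2}+\gamma_m)[\xi_j,\xi_l](A_{mp})$. Evaluating the commutator by Lemma \ref{lemma:commutatore} and using $\partial A_{mp}/\partial A_s=\sum_t g^{st}c_{tmp}$ (again from the inverted expansion \eqref{eq:basis}) turns this into $(\frac{\alpha}{2}+\gamma_m)(\gamma_l-\gamma_j)\sum_{s,t}c_{jls}g^{st}c_{tmp}$. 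The commutator term $g(\nabla_{[\xi_j,\xi_l]}\xi_m,\xi_p)$, computed by expanding $[\xi_j,\xi_l]$ in the $\xi$-basis via $\partial/\partial A_s=\sum_t g^{st}\xi_t$ and applying Lemma \ref{lemma:nabla}, yields the identical expression, so the two cancel and only the quadratic part survives. The main obstacle is precisely this bookkeeping: keeping track of the two occurrences of the inverse metric $g^{st}$ and verifying that the derivative of the structure functions $c$ (which a priori carries second derivatives of the potentials) is tamed by Lemma \ref{lemma:commutatore} so as to match the connection term exactly.
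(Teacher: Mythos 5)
Your proposal is correct and follows essentially the same route as the paper: both reduce the double covariant derivatives via metric compatibility (the Koszul axioms), read off the quadratic part from the bilinear pairings of connection vectors using Lemma \ref{lemma:nabla}, and cancel the leftover derivative terms $(\frac{\alpha}{2}+\gamma_m)\bigl(\xi_j(c_{lmp})-\xi_l(c_{jmp})\bigr)$ against $g(\nabla_{[\xi_j,\xi_l]}\xi_m,\xi_p)$ by rewriting both through Lemma \ref{lemma:commutatore} and the inverted expansion \eqref{eq:basis}. The only difference is organizational (the paper computes the commutator term first), so there is nothing to add.
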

\begin{proof}
Let us  split the computation of the Riemann tensor into two parts. First, one considers the term $g(\nabla_{[\xi_j,\xi_l]}\xi_m,\xi_p)$. One obtains:
\begin{align*}
&g(\nabla_{[\xi_j,\xi_l]}\xi_m,\xi_p)\\
&= \sum_s g(\nabla_{(\gamma_l- \gamma_j)(c_{jls}\frac{\partial}{\partial A_s})}\xi_m,\xi_p)\\
&= \sum_{s,t}((\gamma_l- \gamma_j)c_{jls}g(\nabla_{g^{st}\xi_t}\xi_m,\xi_p)\\
&= \sum_{s,t}((\gamma_l- \gamma_j)c_{jls}g^{st}g(\nabla_{\xi_t}\xi_m,\xi_p)\\
&= (\gamma_l- \gamma_j)(\frac{\alpha}{2}+\gamma_m)\sum_{s,t}g^{st}c_{jls}c_{tmp}
\end{align*}
according to Lemma \ref{lemma:nabla}. Then, one considers the two remaining terms. By using again the properties of the connection 1-form, formalized by the Koszul axioms \cite[p.59]{ON}, and by exploiting for the first time the assumption that the functions $\alpha$ and $\gamma_j$ are constant, one finds
\begin{align*}
 &g((\nabla_{\xi_j}\nabla_{\xi_l}-\nabla_{\xi_l}\nabla_{\xi_j})\xi_m,\xi_p)\\
 = &(\frac{\alpha}{2}+\gamma_m)(\xi_j(c_{lmp})-\xi_l(c_{jmp})) \\
 &- (\frac{\alpha}{2}+\gamma_m)(\frac{\alpha}{2}+\gamma_p)\sum_{st}g^{st}(c_{lms}c_{jpt}-c_{jms}c_{lpt}) .
\end{align*}
One may simplify this expression by noticing that:
\begin{equation*}
\xi_j(c_{lmp})-\xi_l(c_{jmp}) = (\gamma_l-\gamma_j)\sum_{s,t}c_{jls}g^{st}c_{mpt}.
\end{equation*}
Indeed:
\begin{align*}
 &\xi_j(c_{lmp})-\xi_l(c_{jmp}) \\
 & = \xi_j\xi_l (A_{mp})-\xi_l\xi_j(A_{mp})\\
 & = [\xi_j,\xi_l](A_{mp}) \\
 & = (\gamma_l-\gamma_j)\sum_{s} c_{jls}\frac{\partial A_{mp}}{\partial A_{s}}\\
 & = (\gamma_l-\gamma_j)\sum_{s,t} c_{jls}g^{st}\xi_t(A_{mp})\\
 & = (\gamma_l-\gamma_j)\sum_{s,t} c_{jls}g^{st} c_{mpt} .
\end{align*}
By adding the two terms of the Riemann tensor with the proper sign, one finally obtains  the expression of its covariant components as desired.
\end{proof}
\medskip
The vanishing of the Riemann tensor is now a consequence of the fact that the recursion operators form a commutative and associative algebra.

\begin{proof}[Proof of Proposition \ref{pr:RiemannMetric}]
The Riemann tensor contains the following expression
$\sum_{s,t} g^{st}(c_{jms}c_{lpt}-c_{jpt}c_{lms}).$
Notice that 
\begin{equation*}
 \sum_{s}g^{st}c_{jms}=\sum_s g^{ts}\xi_s(A_{jm}) = \frac{\partial A_{jm}}{\partial A_t} = C^{t}_{jm}.
\end{equation*}
Therefore:
\begin{align*}
 &\ \sum_{s,t}g^{st}c_{jms}c_{tlp}\\
 =& \sum_{t}C^{t}_{jm}\xi_{p}(A_{lt})\\
 =& \sum_{t,q}C^{t}_{jm}g_{pq}\frac{\partial A_{lt}}{\partial A_q}\\
 =& \sum_{t,q}g_{pq}C^{q}_{lt}C^{t}_{jm}.
\end{align*}
and consequently:
\begin{equation*}
 R_{mpjl} = (\frac{\alpha}{2} +\gamma_m)(\frac{\alpha}{2} +\gamma_p) \sum_{q,t} g_{pq}(C^{q}_{lt}C^{t}_{jm}-C^{q}_{jt}C^{t}_{lm}).
\end{equation*}
This expression vanishes on account of the associativity condition \eqref{associativity} satisfied by the structure constants of the algebra of the recursion operators.
\end{proof}

Since it is well-known that flat Riemannian metrics define Poisson brackets for systems of differential equations of hydrodynamic type, this result leads us back to the bihamiltonian setting which was our point of departure. In some sense, the circle has been closed. 

\section{Concluding remarks}

This paper aimed at explaining the role of Haantjes manifolds in the theory of the integrable systems of hydrodynamic type and related fields. The main novelty presented here is the square of 1-forms $d A_{j l } $. It is a simple but non trivial  extension of the concept  of bihamiltonian recurrence, which seems to have passed unnoticed so far. 
As shown in this paper, the square of 1-forms recovers many interesting integrability conditions which had already appeared in different contexts. Among them: the Yano-Ako equations of the theory of deformations; the WDVV equations of topological field theories; the integrability conditions of the theory of semihamiltonian systems of Tsarev.
All these integrability conditions have been already thoroughly studied in the past, in particular by Boris Dubrovin in his theory of Frobenius manifolds. Repetitions are therefore unavoidable. Nevertheless, I hope that the geometric framework of the Haantjes manifolds provides a new view of old things, 
and that it allows to see better what is the minimal system of assumptions  to be set at the basis of the theory, and what is the role of each separate assumption. For instance, it shows that the role of the metric is not so essential in understanding the WDVV equations. The points of contact  and the differences with the previous theories will be discussed elsewhere.

\section{Appendix}
In this appendix I recall the definition of the operarator $d_{K}$ and I  exhibit a few of its interesting properties, in order to make the paper reasonably self-contained. 
I take also the opportunity of pointing out a very fine characterisation of semisimple recursion operators having vanishing Haantjes torsion discovered by Nijenhuis in 1955.

\bigskip

\textbf{1.} \emph{Definition of $d_{K}$}. According to the theory of  Fr\"olicher and Nijenhuis, the differential operator $d_{K}$ is the unique derivation, of degree 1, on the algebra of differential forms which 
satisfies the following four conditions:
\begin{align*}
& d_{K}A = K dA \\
& d_{K}( \alpha+\beta) = d_{K}\alpha + d_{K}\beta \\
& d_{K}(\alpha \wedge \beta ) = d_{K}\alpha \wedge \beta + (-1)^{a} \alpha \wedge d_{K}\beta \\
& d_{K} d + d d_{K} = 0.
\end{align*}
This definition is rather abstract, but it is easy to convert it into a powerfull algorithm to compute the differential $d_{K}$ in any concrete situation. First, one starts by writing 
the differential form $\alpha$ as a sum of products of 1-forms. 
Then one uses the second and third conditions to lead $d_{K}$ to act on any single 1-form appearing in $\alpha$. By linearity the problem is reduced to evaluate the differential of simple 1-forms of the type
 $\alpha= BdA$, where $A$ and $B$ are scalar function. This problem is solved by the first and last conditions. In this way one ends up to evaluate always differentials of scalar functions only.
 
Let us follow this procedure to prove the noticeable identity
\begin{equation*}
d_{K}^{2}A(\xi,\eta) = dA( T_{K}(\xi,\eta)).
\end{equation*}
which holds for any scalar function $A$. First we notice that
\begin{align*}
d_{K}^{2}A& = d_{K} (d_{K}A) \\
&=d_{K}\sum_{l} \frac{\partial A}{\partial x^{l}} d_{K}x^{l} \\
&=\sum_{l} d_{K}( \frac{\partial A}{\partial x^{l}}) \wedge d_{K}x^{l} +\sum_{l} \frac{\partial A}{\partial x^{l}} d_{K}^{2}x^{l} \\
&=\sum_{l<m}  \frac{\partial^{2} A}{\partial x^{l}\partial x^{m}} d_{K}x^{m} \wedge d_{K}x^{l} + \sum_{l} \frac{\partial A}{\partial x^{l}} d_{K}^{2}x^{l} \\
&=\sum_{l} \frac{\partial A}{\partial x^{l}} d_{K}^{2}x^{l}.
\end{align*}
Then we evaluate the differentials  $d_{K}^{2}x^{j}$ of the coordinate functions according to the above procedure:
\begin{align*}
d_{K}^{2}x^{j} &= d_{K}( Kdx^{j}) \\
&= \sum_{p} d_{K}(K_{p}^{j} dx^{p}) \\
&= \sum_{p} d_{K}(K_{p}^{j}) \wedge dx^{p} - K_{p}^{j} dd_{K}x^{p} \\
&= \sum_{l,m,p} \left(\frac{\partial K_{p}^{j}}{\partial x^{l}} K_{m}^{l} -K_{l}^{j}\frac{\partial K_{p}^{l}}{\partial x^{m}}\right) dx^{m} \wedge dx^{p}  .
\end{align*}
We conclude that
\begin{equation*}
 d_{K}^{2}x^{j} = \sum_{l<m} T_{l m}^{j} dx^{l} \wedge dx^{m} ,
\end{equation*}
where $ T_{l m}^{j}$ are the components of the torsion tensor $T_{K}$ of $K$. By inverting this formula, we can write the torsion tensor of $K$ in the form :
\begin{equation*}
T_{K} = \sum_{j}  d_{K}^{2}x^{j} \otimes \frac{\partial }{\partial x^{j} }.
\end{equation*} 
It implies that
\begin{equation*}
dA( T_{K}(\xi,\eta))= \sum_{l} \frac{\partial A}{\partial x^{l}} d_{K}^{2}x^{l}(\xi,\eta).
\end{equation*}
The comparison with $d_{K}^{2}A$ proves the identity mentioned above.

\medskip

\textbf{2.} \emph{Identities}. In the study of the recurrence of exact 1-forms pursued in Sec.2 we made use of the identities 
\begin{align*}
d\alpha'(\xi,\eta)&=d\alpha(K\xi,\eta)+d\alpha(\xi,K\eta)-d_{K}\alpha(\xi,\eta)\\
d_K \alpha'(\xi,\eta)&=d\alpha(K\xi,K\eta)+\alpha(T_K(\xi,K\eta))  ,
\end{align*}
relating the $d$ and $d_{K}$ differentials of any 1-form $\alpha$ to the differentials of its iterated 1-form $\alpha'= K\alpha$.
We now prove these identities. By linearity, it is sufficient to consider the special pair of 1-forms  $ \alpha= BdA $ and  $\alpha' = Bd_{K}A $ , where $A$ and $B$ are arbitrary functions.
For $d_{K}\alpha$ we have:
\begin{align*}
d_{K}\alpha &= d_{K}B \wedge dA + B d_{K}dA \\
&= d_{K}B \wedge dA - B d d_{K}A \\
&= d_{K}B \wedge dA + dB \wedge d_{K}A - d(B d_{K}A )  \\
&= d_{K}B \wedge dA + dB \wedge d_{K}A - d\alpha'  .
\end{align*}
Once evaluated on two arbitrary vector fields $\xi$ and $\eta$ this equation gives
\begin{equation*}
d_{K}\alpha(\xi,\eta) = d\alpha(K\xi,\eta) + d\alpha(\xi,\eta) - d\alpha'(\xi,\eta) .
\end{equation*}
This  is already the first identity. To prove the second identity, let us consider
\begin{equation*}
d_{K}\alpha' = d_{K}B \wedge d_{K}A + B d_{K}^{2}A  .
\end{equation*}
Once evaluated on the arbitrary pair of vector fields $\xi$ and $\eta$, this equation gives
\begin{equation*}
d_{K}\alpha' (\xi,\eta) = (dB \wedge dA)(K\xi,K\eta)+ B dA(T_{K}(\xi,\eta)).
\end{equation*}
Since $dB \wedge dA= d\alpha$ it can also be written in the form
\begin{equation*}
d_{K}\alpha' (\xi,\eta) = d\alpha(K\xi,K\eta) +\alpha( T_{K}(\xi,\eta)).
\end{equation*} 
This is the second identity.

\medskip

\textbf{3.} \emph{Recursion operators with vanishing Haantjes torsion}. To conclude this appendix, let us use the above formalism to write a result of Albert Nijenhuis,
concerning the recursion operators with vanishing Haantjes torsion, in a form which is particularly terse and useful. From the previous discussion, it is clear 
that the vanishing of the Haantjes torsion is an algebraic constraint on the Nijenhuis torsion which must be mirrored by the differential 2-form $d_{K}^{2}B$ of any function $B$. 
Assume that the recursion operator $K$ has real and distinct eigenvalues. Then according to Nijenhuis ( compare Eq.(3.10) in \cite{Nij}), there exist at most $(n-1)$  1- forms 
$ \alpha_{0}, \alpha_{1}, \dots, \alpha_{n-2}$ such that
\begin{equation*}
d_{K}^{2}B = \alpha_{0} \wedge dB + \alpha_{1} \wedge KdB + \dots + \alpha_{n-2} \wedge K^{n-2}dB .
\end{equation*}
The 1-forms are independent of the function $B$. They generate a differential ideal which, according to the result of Nijenhuis, contains the differential $d_{K}^{2}B$ of any scalar function $B$. 
This ideal is certainly 
an important element of the geometry of the recursion operator, and its study should provide clues for the classification of the  recursion operators having vanishing Haantjes torsion.
In this appendix I wish to give an example of such an ideal.

One of the simplest possible classes of recursion operators with vanishing Haantjes torsion is certainly the class of operators whose ideal  is generated by a single exact 1-form $\alpha_{0}= dA$.
This class is not void. For instance, the recursion operators associated with the Coxeter groups of type $A_{n}$ have this property. In this class of examples
\begin{equation*}
d_{K}^{2}B = dA \wedge dB
\end{equation*}  
for any function $B$. Therefore for $B=A$ one gets $d_{K}^{2}A = 0$. So the function $A$ characterizing the torsion of $K$ satisfies the strong cohomological condition $d_{K}d_{K}A=0$. 
There are cases where the function $A$ satisfies also the weak cohomological condition $dd_{K}A= 0$. These cases are clearly particularly remarkable. Indeed, without any additional assumption on $K$, 
one may implement a recursive procedure which allows to generate a sequence of functions ${A_{l}}$ satisfying the same cohomological conditions as $A$. 
The recurrence  formula is dictated by the constraint $d_{K}^{2}B = dA \wedge dB $ on the torsion of $K$. Each function $A_{l}$ defines in turn a new tensor field $K_{l}$. It is the unique tensor field 
which commute with $K$ and which maps the 1-form $dA$ into the 1-form $dA_{l}$. By this process the single operator $K$ generates an infinite sequence of operators $K_{l}$. It turns out that these 
tensor fields verify the 
conditions defining a Haantjes manifold. This is a concrete example of how a weak Haantjes manifold may be prolonged into
a Haantjes manifold when the Nijenhuis torsion of $K$ has a `` nice form'', as claimed in Sec.2. When I discussed this subject there informally ,  I had this class of examples in mind. 
I hope that the above short remarks may help to clarify the sense of that informal discussion.

\bigskip
\noindent{\bf Acknowledgements}. I wish to thank Boris Konopelchenko. Together we began the study of Haantjes manifolds.Together, I hope, we shall end it.

\end{document}